\newtheorem{thm}{Theorem}
\DeclareMathOperator{\Ex}{E}
\newcommand{\mb}[1]{\mathbf{#1}}
\newcommand{\bsy}[1]{\boldsymbol{#1}}
\colorlet{shadecolor}{gray!25}
\definecolor{issuePJA_color}{rgb}{1.0,0.0,0.0}
\definecolor{commentPJA_color}{rgb}{1.0,0.0,0.8}
\definecolor{issuePT_color}{rgb}{0.0,0.7,0.0}
\definecolor{commentPT_color}{rgb}{0.1,0.1,0.6}
\definecolor{rev_color}{rgb}{0.0,0.0,0.0}
\newcommand{\rev}[1]{{\textcolor{rev_color}{#1}}}
\begin{document} 

\title{Protein
Drift-Diffusion Dynamics and  
Phase Separation in \\ Curved Cell Membranes and
Dendritic Spines: \\ Hybrid Discrete-Continuum Methods}
\author{Patrick D. Tran} \email{patrickduytran@gmail.com} 
\affiliation{Physics, College of Creative Studies, University of 
California, Santa Barbara (UCSB)}
\author{Thomas A. Blanpied} 
\affiliation{Department of Physiology, University of Maryland}
\author{Paul J. Atzberger} 
\email{atzberg@gmail.com} 
\affiliation{Department of Mathematics and Mechanical Engineering, 
University of California, Santa Barbara (UCSB),
} 

\begin{abstract} 
We develop methods for investigating protein drift-diffusion dynamics in
heterogeneous cell membranes and the roles played by geometry, diffusion,
chemical kinetics, and phase separation.  Our hybrid stochastic numerical
methods combine discrete particle descriptions with continuum-level models
for tracking the individual protein drift-diffusion dynamics when coupled to
continuum fields.  We show how our approaches can be used to investigate
phenomena motivated by protein kinetics within dendritic spines.  The spine
geometry is hypothesized to play an important biological role regulating
synaptic strength, protein kinetics, and self-assembly of clusters.  We
perform simulation studies for model spine geometries varying the neck size
to investigate how phase-separation and protein organization is influenced by
different shapes.  We also show how our methods can be used to study the
roles of geometry in reaction-diffusion systems including Turing
instabilities.  Our methods provide general approaches for investigating
protein kinetics and drift-diffusion dynamics within curved membrane
structures.
\end{abstract}

\maketitle

\section{Introduction} 
In cellular biology, the morphological shapes of cell
membranes play important roles in protein transport and 
kinetics.  Cell
membranes often take on shapes having characteristic geometries
or topologies associated with biological 
function~\cite{Alberts2014,BassereauRoux2005,
Noske2008,Adler2010,BassereauShapeProteinDistribution2013,Vogel2006,
Kusters2014}. Membranes arising in cell biology consist of
heterogeneous mixtures of lipids, proteins, and other small
molecules~\cite{Alberts2014}.  The individual and collective dynamics of
membrane associated molecules carry out diverse functions in cellular processes
ranging from signaling to motility~\cite{Alberts2014,Voeltz2007,Groves2007,
Powers2002,Muller2012,NelsonStatMechMem2004}.  Membranes are effectively
two dimensional fluid-elastic structures resulting in processes that 
can be significantly different than their counter-parts occuring in 
bulk three dimensional fluids~\cite{Saffman1975,
Peters1982,BassereauMobilityConfinement2011,AtzbergerBassereau2014}.
Investigating such cellular processes using computational simulation
requires the ability to capture these effects and the  
geometric and topological contributions of curved membrane
structures to protein drift-diffusion dynamics and kinetics.

We introduce computational methods based on a hybrid approach
coupling discrete and continuum descriptions.
For low concentration species, we track individual proteins as 
discrete particles. For other species, we track 
contributions using continuum fields.  We
circumvent many of the challenges of differential geometry and directly
approximating surface PDEs by developing discrete localized models that capture
geometric effects.  In this way, behaviors of our model emerge on larger
length-scales in a manner capturing the relevant underlying physical phenomena,
while avoiding some of the more common challenges associated with direct
application of PDE discretizations and differential geometry.  We also provide
stochastic local models to account for discrete effects and other fluctuations.

Many computational methods have been introduced for studying membranes.
Methods modeling at the level of continuum fields and partial differential
equation (PDE) descriptions include continuum concentration and 
phase fields in vesicles in~\cite{Allard_Lowengrub_2017}, protein aggregation
in~\cite{Rangamani2021}, and phase separation in~\cite{Yushutin2019}.
Methods modeling at the level of particles include Monte-Carlo (MC) Methods and
Kinetic MC (KMC) in~\cite{Pastor1994,Sintes1998,Kerr2008,Schoeneberg2014,Kahraman2016,
Atzberger_Isaacson_2014,Collins2010,Crane2020}, Molecular Dynamics (MD)
studies in~\cite{Tieleman1997, Goossens2018,Grouleff2015}, and Coarse-Grained (CG)
Models in~\cite{DesernoVirusAggregation2007,Marrink2007}.  Some work has been
done on hybrid discrete-continuum approaches for membranes 
in~\cite{Camley2010,AtzbergerNaji2009,
AtzbergerBilayerVesicle2013,AtzbergerNaji2009,
AtzbergerSigurdsson2012,Reister2005,ReisterGottfried2010, Oppenheimer2009}, and
taking into account geometric effects
in~\cite{AtzbergerSigurdsson2012,Atzberger_Surf_Fluct_2019}, and through 
point-cloud representations in~\cite{Macdonald2013,
AtzbergerGrossHydroSurf2018,Lai2017, Atzberger_Surface_FPT_2021,Crane2020}.

\rev{Our methods provide new ways for handling hybrid discrete-continuum stochastic descriptions for protein drift-diffusion dynamics and kinetics within heterogeneous curved membranes.  We focus particularly on the roles played by phase separation, geometry, and fluctuations arising from discrete number effects.  We develop stochastic methods for bidirectional coupling between protein dynamics and evolving phase fields.  For capturing geometric contributions for general shapes, we also introduce numerical approaches using the induced metric from the embedding space which allows for avoiding the need for potentially cumbersome calculations using explicit expressions from differential geometry.  We also introduce methods for tracking proteins both at the level of continuum mean-field concentration fields and beyond mean-field theory at the level of discrete individual proteins with stochastic trajectories.
}

\rev{Our work is motivated by understanding protein interactions within cell curved membranes, such as neuronal dendritic spines.  Dendritic spines are small $\sim 500$nm structures that 
are attached along the larger shaft of the dendrites of
neurons~\cite{Alvarez2007,Yuste1995}.  These are critical structures in the brain mediating the input of synaptic communications between neurons.  
Dendritic spines have unique morphologies that physiologically change in molecular composition, size, and shape to modulate the synaptic strength between neurons as part of long-term learning and memory~\cite{BlanpiedCaMKII2014,Yuste1995,Herring2016,Nicoll2017}.  
The important connection between membrane geometry and synaptic strength is an active area of current theoretical and experimental research \cite{Holcman2011,Li2016,Wang2016,
Adrian2017,Chen2017,
Simon2014,Cartailler2018,Rangamani2019,Borczyk2019,Tapia2019,Miermans2017,
Toennesen2016,Nishiyama2015}.  
Recent advances in microscopy and single-particle tracking
techniques are providing some indicators of the underlying processes
regulating protein transport and kinetics within spines \cite{Heine2008,
Blanpied2012, BlanpiedCaMKII2014,Blanpied2010, Holcman2015,Jaskolski2009,
Toennesen2014,Groc2020}. 
}

\rev{A hypothesis which we explore with our methods is that protein complexes
can locally nucleate liquid-liquid phase separations that couple to proteins to influence both diffusive transport and kinetics.  Our work is motivated by the recent observations concerning the roles of SynGAP binding to PSD-95 resulting in complexes that phase separate playing a role in organizing proteins in
dendritic spines in forming the post-synaptic density~\cite{Zeng2016,Zeng2019}. In
our initial work presented here, we do not commit in models yet to specific
proteins, but focus more generally on mechanisms by which protein diffusion and
phase separation can drive cluster formation and augment reaction kinetics
consistent with such observations.  For exploring such hypotheses, we focus on
the development of methods for quantitative biophysical models and computational
simulations capable of investigating such effects.  For this purpose, we
develop spatial-temporal biophysical models using our introduced hybrid discrete-continuum approach to investigate the roles of geometry and phase separation in protein diffusion and kinetics.}

\rev{We discuss details for our modeling approach for 
protein drift-diffusion dynamics in Section~\ref{sec_drift_diffusion}
and our simulation studies in Section~\ref{sec_simulations}.
We discuss the drift-diffusion dynamics of proteins coupled to 
phase fields and
discretizations for curved membranes having general shapes in Sections~\ref{sec_particle_dynamics}, ~\ref{sec_markov_chain}, ~\ref{sec_detailed_balance}, and ~\ref{sec_reaction_coupling}.
We investigate the accuracy of our numerical approximations in Section~\ref{sec_approx_err}.  
We demonstrate how our methods can be used to compute first passage times 
and other statistics for membranes of general shape, in some cases without
the need for costly Monte-Carlo sampling, in Section~\ref{sec_u_stat}.  
}

\rev{We perform simulations using our methods of discrete and continuum systems
in~\ref{sec_simulations}.  We develop continuum field methods for 
reaction-diffusion systems with pattern forming Turing instabilities 
capturing the contributions of geometry in Section~\ref{sec_rd}. 
We develop mechanistic models for dendritic spine shapes 
and perform simulation to investigate the roles of 
geometric effects, heterogeneities arising from
phase separation, and discrete number effects 
in Section~\ref{sec_spine} and Section~\ref{sec_clustering_phase_sep}.
The introduced methods provide general approaches for
investigating protein transport and kinetics within heterogeneous cell
membranes.}

\section{Hybrid Discrete Particle-Continuum Approach for Curved Surfaces} 
\label{curved_diffusion}
\label{sec_drift_diffusion}
\rev{We develop models using a hybrid discrete particle and 
continuum field approach for investigating transport 
and reactions within curved surfaces, see
Figure~\ref{fig_markov_chain}.}
For low concentration chemical
species, we introduce approaches for modeling at the level of tracking 
individual particles.  For larger concentrations, we develop 
continuum field descriptions and methods for curved surfaces.  
We also account for the bi-directional coupling between 
the discrete particles and continuum fields.

\begin{figure} 
\centering 
\includegraphics[width=0.45\textwidth]{./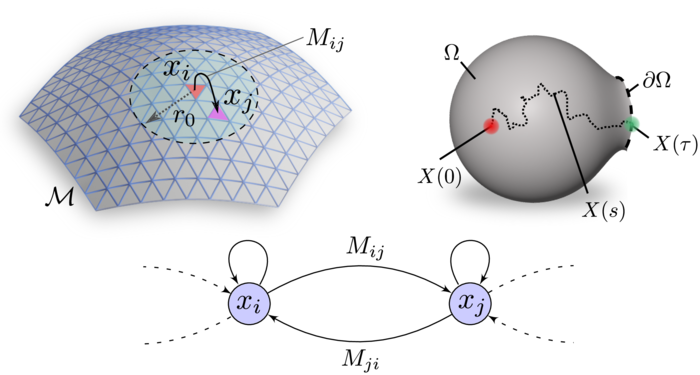}
\caption{Markov-Chain Discretization of Particle Drift-Diffusion on Curved 
Surfaces.  The surface is discretized using a triangulation with random 
walks between centriods $x_i$ and $x_j$ having jump rates $M_{ij}$ 
determined by the local geometry (top-left) which can be expressed as 
a Markov-Chain (bottom).  
The Markov-Chain discretization yields a backward equation for $\Omega$ 
allowing readily for computation of first-passage times $\tau$
(FPTs) for reaching the boundary $\partial \Omega$ and other 
statistics without the need in some cases for Monte-Carlo sampling
(top-right).  } 
\label{fig_markov_chain} 
\end{figure}

\subsection{Particle Drift-Diffusion Dynamics} 
\label{sec_particle_dynamics}

We discuss a few results related to the drift-diffusion
dynamics of particles useful in developing our methods.  \rev{We then 
approximate the particle dynamics using a Markov-Chain process~\cite{Ross1996} 
with jump rates based on estimating the local geometry, see
Figure~\ref{fig_markov_chain}.}

\rev{The drift-diffusion dynamics of a particle immersed within a viscous fluid
is given by the Langevin equation~\cite{Gardiner1985,Reichl1997}
\begin{equation} 
m\,\dd\mb{V}_t= -\gamma \mb{V}_t\, dt -\nabla U(\mb{X}_t)\,\dd
t+\sqrt{2k_BT\gamma}\,\dd \mb{W}_t, \label{langevin}
\end{equation} 
where $d\mb{X}_t = \mb{V}_tdt$.  This is to be interpreted as 
an It\^o process~\cite{Oksendal2000,Gardiner1985}.  The
$\gamma$ is the drag,
$U(\mb{x})$ is a potential energy,  $k_B$ is Boltzmann's
constant, and $T$ is the temperature~\cite{Reichl1997}.
The $\dd \mb{W}_t$ are increments of 
the Wiener process~\cite{Oksendal2000,Gardiner1985}.  The diffusion
coefficient is given by $D = k_B{T}/\gamma$.}

\rev{We remark that in some cases the protein diffusion can also be influenced
by local protein-induced deformations of the
membrane~\cite{AtzbergerBassereau2014}, crowding effects (sub-super diffusive
and other regimes)~\cite{Hoefling2013,Fanelli2010}, or cytoskeletal
interactions~\cite{Takenawa2007,Kahraman2016}.  Our methods could be combined with such models for local deformation and crowding by augmenting the 
Markov-Chain jump rates.  In the present work, we treat
the membrane as having on average an effectively constant shape.  We focus on
models in the normal diffusive regime with explicit modeling of protein
species, binding partners, or obstacles.  We treat additional contributions
through effective diffusion coefficients or in the formulated protein
interaction forces.}

When $m/\gamma \ll \ell^2/D$,
where $\ell$ is the radius of the particle, the inertial contributions
are negligible.  In this regime, the Langevin equation can be reduced
to the over-damped Smoluchowski equation
\begin{equation} 
\dd\mb{X}_t=-\frac{1}{\gamma}\nabla U(\mb{X}_t)\,\dd t+\sqrt{2D}\,\dd\mb{W}_t.  
\end{equation}
This can be expressed in terms of probability densities 
$\rho(\mb{x},t)$ for when $\mb{X}_t = x$. This satisfies the 
Fokker-Planck (FP) equation 
\begin{equation}
\pdv{\rho}{t}=-\nabla\cdot \mb{J},\;\;\;\; \;\;
  \mb{J} = \left(-\frac{1}{\gamma}\nabla U\right)\rho - D\nabla \rho.\label{FP}
\end{equation} 
When $U = 0$, this has the well-known Green's Function for Euclidean space
\begin{equation}
K(\mb{x}',\mb{x};t)=\frac{1}{(4\pi Dt)^{d/2}}
\exp\left(-\frac{(\mb{x}' - \mb{x})^2}{4Dt}\right).
\label{kern}
\end{equation} 
From the FP equation, the $K(\mb{x}',\mb{x};t)$ has 
the interpretation of the probability of a 
particle starting with $\mb{X}_0 = \mb{x}$ and diffusing to location 
$\mb{X}_t = \mb{x}'$ over the time duration $t$.  We will use a 
related approach to determine
our Markov-Chain jump rates.  The FP equation has
a steady-state $\rho^*$ with detailed balance if 
\begin{equation}
\mb{J}[\rho^*] = -\gamma^{-1}\nabla U\rho^* - D\nabla \rho^* = 0.
\end{equation}
For a smooth $U(x)$ with a sufficient growth
rate as $|x| \rightarrow \infty$, the FP equation has 
steady-state with detailed-balance for the distribution
\begin{equation}
\rho^*(\mb{x}) = \frac{1}{Z} \exp\left(-\frac{U(x)}{kT}\right).
\end{equation} 
This is the Gibbs-Boltzmann distribution, and $Z$ is the partition function
normalizing this to be a probability density~\cite{Reichl1997}.  

\subsection{Markov-Chain Discretization for Particle Drift-Diffusion 
Dynamics on Curved Surfaces}
\label{sec_markov_chain}

We model the drift-diffusion dynamics of individual particles on curved
surfaces using discrete Markov-Chains with the jump rates based on 
the local geometry.  The surface is discretized 
into a triangulated mesh and each particle is tracked by the triangle 
which it occupies. We use that the surface metric is induced by the 
surrounding embedding space~\cite{Pressley2001,Abraham1988}.
We use for the local jump rates  
\begin{equation}
M_{ij} = C_i\exp\left(-\frac{|x_{i}-x_{j}|^{2}}{\epsilon^2}\right).
\label{diffprob} 
\end{equation} 
To approximate the diffusion over the time-scale $\Delta{t}$ on the 
surface $\mathcal{M}$, we use $\epsilon = \sqrt{4D\Delta{t}}$.  The 
$x_i \in \mathcal{M}$ are the centers of the surface triangulation.
The $C_i$ denotes the normalization constant when summing over 
index $j$ ensuring that $M$ is a right stochastic matrix~\cite{Ross1996}. 

\rev{We remark in our methods the approach of using the induced metric from the
embedding allows for capturing the geometric contributions while avoiding the
potentially cumbersome expressions that can arise from a more explicit
treatment using differential geometry.  Our methods utilize that the distances
between points on the surface correspond to the arc-length determined by the
path as measured using the embedding space and its associated notion of
distance.  As a result, in our methods the discretized surface inherits its
metric without the need for further analytic derivations.  Further properties
of our methods include that the discretizations for curved surfaces are based
on Markov-Chain transitions and as a consequence will have mass conservation up
to numerical round-off errors.}

The kernel $M_{ij}$ has been shown in the limit of 
refining the surface sampling to approximate 
diffusion under the surface Laplace-Beltrami 
operator $D\Delta_{\mathcal{M}}$~\cite{Singer2006, Belkin2008}.
This has been shown to have the accuracy
\begin{equation}
\begin{split}
\lim_{N\to\infty}\sum_{j=1}^{N}(M_{ij}-I_{ij})u_j=
\frac{\varepsilon^{2}}{4}\Delta_{\mathcal{M}}u(x_{i})
+\mathcal{O}\left(\frac{1}{N^{1/2}},\varepsilon^{4}\right). 
\label{equ_M_conv_LB}
\end{split} 
\end{equation}
The $\mathcal{O}$ holds with $\epsilon \rightarrow 0$.
The $N$ is the number of points sampling $\mathcal{M}$ subject to 
a uniformity condition~\cite{Singer2006,Belkin2008,
AtzbergerGrossHydroSurf2018}.
The $u_j = u(x_j)$ samples a smooth test function $u$.  
Intuitively, this follows since the stochastic matrix 
$M$ converges to the operator as
$\exp(\varepsilon^{2}\Delta_{\mathcal{M}}/4)=\exp(D\Delta{t}\Delta_{\mathcal{M}})\simeq
\exp(D\Delta{t}\Delta)$ and 
$\exp(D\Delta{t}\Delta_{\mathcal{M}})
\simeq I+D\Delta{t}\Delta_{\mathcal{M}} 
\simeq I+D\Delta{t}\Delta$, where $\Delta = \nabla\cdot\nabla$ is the standard
Laplacian. The last two 
terms are motivated by the Taylor expansions of the exponential
and the geometric terms as $\epsilon \rightarrow 0$.  

\begin{figure} 
\centering
\includegraphics[width=0.40\textwidth]{./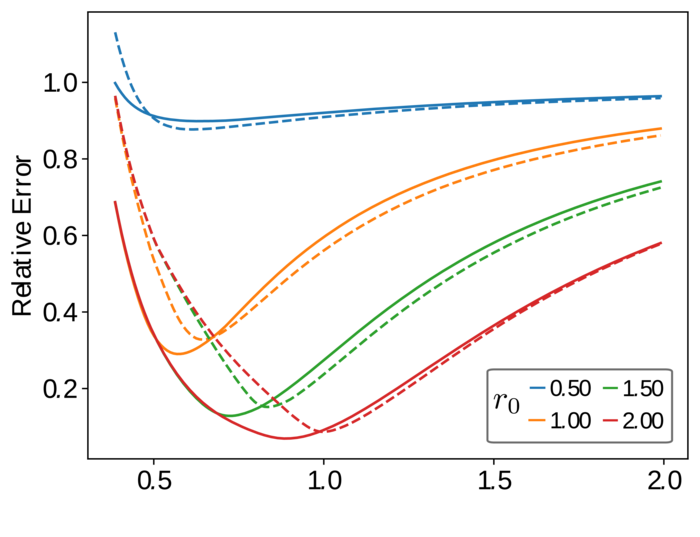}
\caption{Approximation of Laplace-Beltrami for $\epsilon$ and $r_0$.
Shown is the relative error of the Markov-Chain surface discretization
compared with the Laplace-Beltrami operator on the unit sphere $S^2$.  
The case without area correction in equation~\eqref{equ:M_db} are shown
as solid lines and with the area correction with dotted lines.
There is a trade-off in $\epsilon$ to make it sufficiently large
to contain enough points sampling the surface while also 
maintaining locality.  For the truncation radius $r_0 \geq 1.5$ 
the optimum is around $\epsilon \simeq 1$.}
\label{fig_conv_rel} 
\end{figure}

Since the
the surface properties will only be approximated when there
are a sufficient number of sample points in the support of the 
kernel, for a given $N$ there is a trade-off in the choice 
of $\epsilon$.  If $\epsilon$ is too large the approximation
will not be of local surface properties.  If $\epsilon$ is 
too small only the center point will contribute significantly
to the kernel.  We investigate further this trade-off in 
$\epsilon$ and the resulting approximation accuracy in 
Section~\ref{sec_approx_err}.

\rev{We remark that our approach can also readily be modified to obtain models
for proteins in super-sub diffusive regimes.  One way this can be accomplished
is by constructing random walks that are non-local on the mesh having
long-tails or making state transitions between additional non-spatial states.
These models could be constructed from theoretical considerations or empirical
observations of autocorrelation statistics~\cite{Hoefling2013}.}

\subsection{Detailed Balance and Area Corrections}
\label{sec_detailed_balance}
To incorporate the contributions of the drift  arising
from $U$ in equation \eqref{langevin}, we 
consider the Gibbs-Boltzmann distribution expressed
as 
$\rho(x) = {Z}^{-1} \exp\left(-\beta U(x)\right)$, with
$\beta = (k_B{T})^{-1}$ the inverse thermal energy.
We seek discretizations preserving statistical 
structure, such as detailed-balance as in 
~\cite{Schutte2011,Elston2003}. For our curved surfaces,
we seek discretizations for methods that have at steady-state 
the surface Gibbs-Boltzmann distribution with 
detailed balance~\cite{Reichl1997}.
We can express the evolution of the discrete probability in
terms of net fluxes as
\begin{equation} 
\begin{split}
  p_i^{(n+1)} = p_i^{(n)} + \sum_{j,j \neq i} J_{ij}^{(n)},
  \hspace{0.4cm} J_{ij}^{(n)} =  p_j^{(n)} M_{ji}
  -p_i^{(n)}M_{ij}.
\end{split} 
\end{equation} 
At steady-state $p_k^{(n+1)} = p_k^{(n)}=p_k^*$ we design our transition rates
so that the we have a discrete surface Gibbs-Boltzmann distribution with 
approximate detailed balance.   The discrete detailed-balance 
$J_{ij} = 0$ gives the conditions
\begin{equation} 
p_i^* = \exp\left(-\beta U_i\right) A_i/Z, \hspace{0.1cm} 
\frac{M_{ij}}{M_{ji}} = \frac{p_j^*}{p_i^*} 
= \frac{\exp\left(-\beta U_j\right)A_j}{\exp\left(-\beta U_i\right)A_i},
\label{equ:M_db}
\end{equation} 
where $U_k = U(x_k)$ and $Z = \sum_i \exp\left(-\beta U_i\right)A_i$.

Motivated by these conditions, we discretize using the transition rates
\begin{equation} 
\begin{split}
  M_{ij} &= C_i  \exp\left(-\frac{|x_i-x_j|^2}{\varepsilon^2}\right)
  \times\exp\left(\frac{U(x_i)-U(x_j)}{2k_BT}\right) \\
  &\times\exp\left(\frac{1}{2}\ln\frac{A_j}{A_i}\right), 
\label{equ:M}
\end{split}
\end{equation} 
where $C_i$ normalizes the $M_{ij}$ to be a probability when
summing over the index $j$.
The conditions hold up to the normalization ratio $C_i/C_j \rightarrow 1$ as 
the discretization is refined with 
$N \rightarrow \infty$, $\epsilon \rightarrow 0$.

\begin{algorithm}[H]
\caption{Hybrid Reaction-Diffusion Method}
\label{algrdmethod}
\begin{algorithmic}[0]
\State $X_i^0 \gets $ initialize particle positions.
\State $S_i^0 \gets $ initialize particle states.
\State $q^0 \gets $ initialize fields.
\State $n \gets 0$.
\While{n $< $ numsteps} \\
$\triangleright$ Update particles using jump rates from Section~\ref{sec_detailed_balance}.\\
\hspace{0.2cm} $X_i^{n+1} \gets x_b$, 
$\Pr{X_i^{n+1} = x_b| X_i^{n} = x_a} =  M_{ba}(q^n,\{S_k^n\})$. \\
$\triangleright$ Check for chemical reactions $\mathcal{C}$ and update the states.\\
\hspace{0.2cm} $S_i^{n+1} \gets s_a$, 
$\Pr{S_i^{n+1} = s_a| X_i^{n+1} = x_b, \{X_k^{n}\}_{k\neq i}, \{S_k^n\}}$. \\
$\triangleright$ Update the fields.\\
\hspace{0.2cm}  $q^{n+1} \gets \Psi(q^n)$. \\
\hspace{0.2cm}  n $\gets$ n + 1.
\EndWhile
\end{algorithmic}
\end{algorithm}

\subsection{Chemical Reactions and Field Coupling}
\label{sec_reaction_coupling}
\rev{
We also formulate methods for capturing chemical reactions.  We introduce for each particle $X_i(t)$ an associated "state" variable $S_i(t)$. The $S_i(t)$ can be thought of tracking the chemical species to which $X_i$ belongs at time $t$.  We consider a collection of chemical reactions $\mathcal{C} = \{c_\ell\}_{\ell =1}^m$. For $c_\ell$ we model the reaction using a Smoluchowski reaction radius $r_\ell$ and probability $p_\ell$.  For a second-order reaction, this means if two particles $X_i^n$
and $X_j^n$ have $\|X_i^n - X_j^n\| < r_\ell$, then we take probability $p_\ell$ for a reaction to occur.  For first-order reactions, we just take $p_\ell = p_\ell(\Delta{t})$ for each particle for the 
probability a reaction occurs over the time-step.
}

\rev{
We also allow for coupling of the particles to underlying fields $q = q(\mb{x},t)$.  We discretize $q$ in time and space using the lattice with $q_a^n \sim q(\mb{x}_a,t_n)$.
As we discuss in later sections, the 
fields can influence the local jump rates
and reactions.  The evolution of the field is modeled using $q^{n+1} = \Psi(q^n)$.  The case of multiple fields can be handled by taking $q$ to be 
vector-valued.  We shall discuss in more detail specific evolution models for concentration fields in Section~\ref{sec_rd} and phase-fields in Section~\ref{sec_phase_sep}.  We give a summary of the steps in our method in Algorithm~\ref{algrdmethod}.
}

\subsection{Approximation Errors}
\label{sec_approx_err}
\rev{To obtain accurate diffusive dynamics we consider how parameter choices influence the jump rates and the approximation of the Laplace-Beltrami operator discussed in Section~\ref{sec_detailed_balance}.}
We investigate the accuracy of the surface discretizations
and the trade-offs in the choice of $\epsilon$ between sufficient
sampling and maintaining locality.  We perform our studies
for the surface of the unit sphere $S^2$, which has the Laplace-Beltrami 
operator $\Delta_{\mathcal{M}}$ with eigenfunctions 
corresponding to the spherical 
harmonics~\cite{Mueller1966,
HandBookSphericalHarmonics2010,Abraham1988}.
In the comparisons, our numerical operator $M$ is obtained
from equation~\eqref{equ:M} and the scalings
indicated in~\eqref{equ_M_conv_LB} to yield the approximation
$L = (M - I) \simeq (\epsilon^2/4)\Delta_{\mathcal{M}}$.
In practice for efficient calculations, the $M_{ij}$ is constructed by
truncating the kernel only to use neighbors $x_j$ within the distance $r_0$
with $|x_i - x_j| < r_0$.
We consider the errors for a test function $v$ given by 
\begin{equation}
e_{\text{LB}}[v] = \left(\frac{4}{\epsilon^2}\left(M - I\right)-\Delta_{S^2}\right)v.
\end{equation} 

\rev{In Figure~\ref{fig_conv_rel}, we show the relative errors based on $\epsilon_{rel} = \|e_{\text{LB}}[v]\|_1/\|\Delta_{S^2}[v]\|_1$, where $\|\cdot\|_1$ is the $L^1$-norm averaging over the surface. We consider case when $v$ is the spherical harmonic corresponding to
$v = v(x,y,z) = z$ restricted to the surface.  The sphere is discretized using a triangular mesh with nearly uniform elements having $N=10,0000$ nodal points.
We consider how the error varies for different choices of 
$\epsilon \in [0.25,2.0]$ and $r_0 \in[0.5,2.0]$. 
Letting $\delta{x} = \min_{ij} |x_i - x_j|$, we 
find that when $\epsilon \ll \delta{x}$ there
are insufficient number of points in the support of the kernel to estimate 
the surface geometry and the error becomes large.  We find when $\epsilon \gg \delta{x}$
is large there are many points within the support of the kernel, but the area 
of support is not localized enough to provide a good estimate of the operator.
We also show both the case with and without the area correction terms.
We find for our relatively uniform triangulations these give comparable 
overall errors here.  For our discretizations of the sphere based on $N=10,000$ points and $r_0 \geq 1.5$, we find 
that the optimal choice is $\epsilon \simeq 1$, see 
Figure~\ref{fig_conv_rel}. 
}

\subsection{Role of Geometry in First-Passage Times and Other Statistics}
\label{sec_u_stat}
We perform analysis to develop some results showing how 
our methods can be used for investigating the role of geometry 
of the first-passage times and other statistics associated
with the drift-diffusion dynamics of particles 
on curved surfaces.
Our Markov-Chain discretizations allow in some cases 
for computing efficiently statistics without the need to resort 
to Monte-Carlo sampling, provided the state space is not too large.
\begin{figure}
\centerline{\includegraphics[width=0.5\textwidth]{./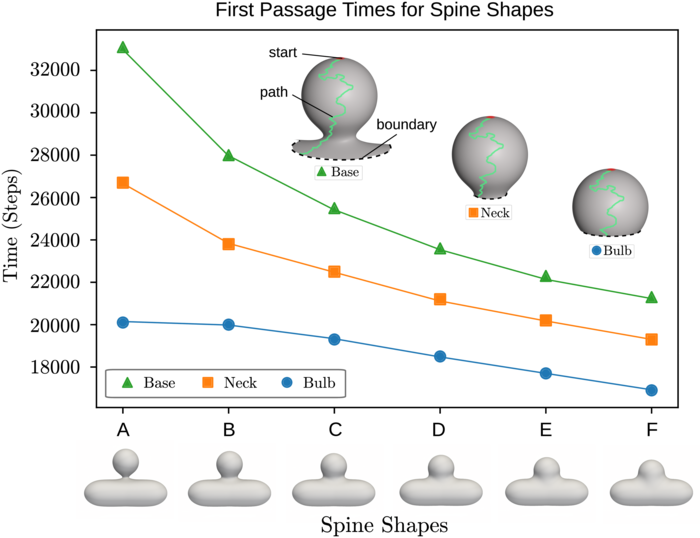}}
\caption{First-Passage Times for Spine Shapes.  First passage times are
computed when a protein starts at the top of the bulb-like head region and
reaches the boundary.  To investigate the role of the neck region verses other
aspects of the shapes, three cases are considered for the boundary location,
(i) at the base of the spine, (ii) in the middle of the neck, and (iii)
within the spherical head region just above the neck.  The results show that the
neck region plays the dominant role in the geometry.  As the neck narrows, the
diffusion of the protein to leave the head region and enter the tubular 
domain has a first-passage time that significantly increases from the geometry. 
} 
\label{fig_fpt} 
\end{figure}
We consider statistics of the form
\begin{equation} 
u_i^{(n)} = \Ex\left[f(X^{(N)}) +
\sum_{k=n}^{N - 1}g(X^{(k)},t) \; \middle|\; X^{(n)}=x_i \right].
\label{equ:stat_u_i_n}
\end{equation} 
Let $\mb{u}^{(k)}$ be the column vector with components
$[\mb{u}^{(k)}]_i = u_i^{(k)}$. 
The $f,g:\mathcal{M}\to\mathbb{R}$ are any two smooth functions
on the surface $\mathcal{M}$.  Let $[\mb{f}]_i = f(x_i)$ and 
$[\mb{g}^{(\ell)}]_i = g(x_i,\ell)$ be column vectors.
We also consider statistics of the form 
\begin{equation} 
w_i^{(\Omega)} = \Ex\left[f(X^{(\tau_\Omega)}) +
\sum_{k=0}^{\tau_\Omega - 1}g(X^{(k)}) \; \middle|\; X^{(0)}=x_i \right].
\label{equ:stat_u_omega}
\end{equation} 
For the domain $\Omega$, the 
$\tau_\Omega = \inf \{k \geq  0 \; | \; X^{(k)} \not\in \Omega\}$ is the 
stopping time index for the process to reach the boundary $\partial \Omega$.
Each of these statistics can be computed without sampling 
by the following results.
\begin{thm}
The statistics $\mb{u}^{(k)}$ of equation~\ref{equ:stat_u_i_n} satisfies
\begin{align}
\mathbf{u}^{(n-1)}&=M\mathbf{u}^{(n)}+\mathbf{g}^{(n-1)}\\
\mathbf{u}^{(N)}&=\mathbf{f}.  
\end{align} 
\end{thm}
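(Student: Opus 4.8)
The plan is to establish the recursion by conditioning on the first step of the Markov chain and using the tower property of conditional expectation. The statistic $u_i^{(n)}$ is an expectation over trajectories of the chain started from $x_i$ at step $n$; the running cost $\sum_{k=n}^{N-1} g(X^{(k)},k)$ naturally splits into the term $g(x_i, n-1)$ — wait, more carefully: to get a recursion relating step $n-1$ to step $n$, I would write $u_i^{(n-1)}$ by peeling off the $k=n-1$ term of the sum, which equals $g(x_i,n-1)$ deterministically given $X^{(n-1)}=x_i$, and then handle the remaining sum $\sum_{k=n}^{N-1} g(X^{(k)},k)$ together with the terminal term $f(X^{(N)})$ by conditioning on $X^{(n)}$.

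The key steps, in order, are as follows. First, fix the terminal condition: at $k=N$ there is no running cost, the sum in \eqref{equ:stat_u_i_n} is empty, and $u_i^{(N)} = \Ex[f(X^{(N)}) \mid X^{(N)} = x_i] = f(x_i)$, which gives $\mathbf{u}^{(N)} = \mathbf{f}$. Second, for $n-1 < N$, start from the definition
\begin{equation*}
u_i^{(n-1)} = \Ex\!\left[ f(X^{(N)}) + \sum_{k=n-1}^{N-1} g(X^{(k)},k) \;\middle|\; X^{(n-1)} = x_i \right].
\end{equation*}
Pull out the $k=n-1$ term, which is $g(x_i,n-1) = [\mathbf{g}^{(n-1)}]_i$ since $X^{(n-1)}=x_i$ is given. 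Third, apply the tower property conditioning on $X^{(n)}$: by the Markov property, $\Ex[\,\cdot \mid X^{(n)} = x_j, X^{(n-1)} = x_i] = \Ex[\,\cdot \mid X^{(n)} = x_j]$ for the functional of $(X^{(n)},\ldots,X^{(N)})$, and this inner expectation is exactly $u_j^{(n)}$. Fourth, average over $x_j$ with the one-step transition probabilities $\Pr(X^{(n)} = x_j \mid X^{(n-1)} = x_i) = M_{ij}$, yielding $\sum_j M_{ij} u_j^{(n)} = [M\mathbf{u}^{(n)}]_i$. Collecting terms gives $u_i^{(n-1)} = [M\mathbf{u}^{(n)}]_i + [\mathbf{g}^{(n-1)}]_i$, i.e. $\mathbf{u}^{(n-1)} = M\mathbf{u}^{(n)} + \mathbf{g}^{(n-1)}$.

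The main thing requiring care — though not really an obstacle — is the bookkeeping of which time index the running cost $g$ is evaluated at and ensuring the Markov property is applied to a functional that depends only on the future trajectory $(X^{(n)}, X^{(n+1)}, \ldots, X^{(N)})$; once the $g(x_i,n-1)$ term is separated out, the remaining quantity is precisely such a functional, so the conditioning on $X^{(n)}$ decouples cleanly from the history. One should also note the convention that $M$ here acts on the right on column vectors with $M_{ij}$ the probability of moving from $x_i$ to $x_j$, consistent with $M$ being right stochastic as stated after \eqref{diffprob}; with the opposite convention one would write $M^{T}$. No regularity beyond finiteness of the expectations (guaranteed since $\mathcal{M}$ is compact, $f,g$ smooth, and $N$ finite) is needed, so the argument is essentially a direct unwinding of definitions plus the Markov property.
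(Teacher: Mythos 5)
Your proof is correct, but it takes a different route from the paper's. You derive the recursion directly by first-step analysis: peel off the $k=n-1$ running-cost term, condition on $X^{(n)}$ via the tower property, and invoke the Markov property so that the inner expectation is exactly $u_j^{(n)}$, weighted by the transition probabilities $M_{ij}$. The paper instead works forward: it introduces the distribution matrices $P^{(m)}$ with $P^{(t)} = P^{(n)} M^{t-n}$, expands the expectation as sums against these distributions, and arrives at the explicit closed form $\mathbf{u}^{(n)} = M^{N-n}\mathbf{f} + \sum_{t=n}^{N-1} M^{t-n}\mathbf{g}^{(t)}$, from which the recursion is read off by factoring out one power of $M$. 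Your argument is the more elementary and more purely probabilistic one, needing only one-step conditioning and no matrix powers; the paper's buys, as a byproduct, the explicit solution formula for $\mathbf{u}^{(n)}$ in terms of $M$, $\mathbf{f}$, and $\mathbf{g}$. Both handle the terminal condition $\mathbf{u}^{(N)} = \mathbf{f}$ identically, and your remark about the right-stochastic convention ($M_{ij}$ as the $i\to j$ probability, so $M$ multiplies the column vector $\mathbf{u}^{(n)}$ on the left) matches the paper's usage.
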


\begin{proof} 
(see Appendix~\ref{sec_fpt})
\end{proof}

\begin{thm}
\rev{The statistics $\mb{w}^{(\Omega)}$ in equation~\ref{equ:stat_u_omega} 
satisfies}
\begin{eqnarray}
(\hat{M} - \hat{I}) \mb{w} &=& -\mb{g} \\
\partial{\mb{w}} &=& \mb{f}.
\end{eqnarray}
The $\partial{\mb{w}}$ extracts entries for all indices 
with $x_i \in \partial \Omega$.  The $\hat{M},\hat{I}$ 
refers to the matrix only with the rows with indices 
in the interior of $\Omega$.
\end{thm}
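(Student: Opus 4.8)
The plan is to prove this by a one-step (first-transition) analysis, exactly as one does for the discrete Dirichlet problem associated with a Markov chain. The two equations in the statement separate according to whether the starting index $x_i$ lies on the boundary $\partial\Omega$ or in the interior of $\Omega$.

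First I would dispose of the boundary condition. If $x_i \in \partial\Omega$, i.e. $x_i \notin \Omega$, then by the definition $\tau_\Omega = \inf\{k \geq 0 : X^{(k)} \notin \Omega\}$ we have $\tau_\Omega = 0$ when $X^{(0)} = x_i$. The sum $\sum_{k=0}^{\tau_\Omega - 1} g(X^{(k)})$ is then empty and $w_i^{(\Omega)} = \Ex[f(X^{(0)}) \mid X^{(0)} = x_i] = f(x_i)$, which is precisely the asserted relation $\partial\mathbf{w} = \mathbf{f}$.

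Next, for $x_i$ in the interior of $\Omega$ we have $\tau_\Omega \geq 1$, so I would peel off the $k=0$ term of the running sum, which contributes the deterministic value $g(x_i)$, and then condition on the first transition $X^{(1)} = x_j$, an event of probability $M_{ij}$. Using time-homogeneity and the Markov property, the shifted process $(X^{(k+1)})_{k \geq 0}$ conditioned on $X^{(1)} = x_j$ is again a copy of the chain started from $x_j$, and its exit index from $\Omega$ equals $\tau_\Omega - 1$; re-indexing the remaining sum by $k \mapsto k-1$ shows that $\Ex[f(X^{(\tau_\Omega)}) + \sum_{k=1}^{\tau_\Omega - 1} g(X^{(k)}) \mid X^{(1)} = x_j] = w_j^{(\Omega)}$. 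Summing over $j$ with weights $M_{ij}$ gives $w_i^{(\Omega)} = g(x_i) + \sum_j M_{ij}\, w_j^{(\Omega)}$ for every interior index $i$, where $j$ ranges over all nodes (interior and boundary). Rearranging and retaining only the interior rows yields $(\hat M - \hat I)\mathbf{w} = -\mathbf{g}$, completing the derivation. As an alternative route, the identity also follows from the previous theorem by applying it with the bounded stopping index $\tau_\Omega \wedge N$ in place of $N$ and letting $N \to \infty$.

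The step that needs the most care is not the algebra but verifying that $w_i^{(\Omega)}$ is well defined — that $\tau_\Omega < \infty$ almost surely and that the expectation may be interchanged with the (a priori infinite) sum — and, relatedly, that the linear system has a unique solution. For the finite state spaces coming from our triangulations this holds whenever every interior node can reach $\partial\Omega$ under $M$: then the interior-to-interior block of $\hat M$ is substochastic with spectral radius strictly less than one, so $\tau_\Omega$ has geometric tails (making the sum absolutely convergent and justifying dominated convergence), and that same block minus the identity is invertible, so $\mathbf{w}$ is uniquely pinned down by $(\hat M - \hat I)\mathbf{w} = -\mathbf{g}$ together with the boundary data $\partial\mathbf{w} = \mathbf{f}$.
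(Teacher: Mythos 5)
Your proof is correct and follows essentially the same route as the paper's: a first-transition (one-step) analysis conditioning on $X^{(1)}$, which yields $w_i = g(x_i) + \sum_j M_{ij} w_j$ for interior $x_i$ and $w_i = f(x_i)$ on $\partial\Omega$. The only differences are presentational --- the paper splits into the cases $\mathbf{f}=0$ and $\mathbf{g}=0$ by linearity and recombines, while you treat both terms in a single pass --- and your closing remarks on the finiteness of $\tau_\Omega$, the substochasticity of the interior block, and the resulting uniqueness of $\mathbf{w}$ supply a well-posedness check that the paper leaves implicit.
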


\begin{proof}
(see Appendix~\ref{sec_fpt})
\end{proof}
In the case that $\mb{f}=0,\mb{g} = \bsy{1}$, this becomes the First-Passage Time (FPT)
statistic $w_i=\Ex[\tau_\Omega\;|\; X^{(0)}=x_i]$.  These results allow for the 
statistics of equation~\ref{equ:stat_u_i_n} and~\ref{equ:stat_u_omega} 
to be computed efficiently without the need for Monte-Carlo sampling provided
the state space is not too large.  

\rev{Motivated by observations that the neck geometry appears to be a strong
factor in compartmentalization in dendritic
spines~\cite{Toennesen2014,Hugel2009}, we compute the first passage times of
protein diffusions for different spine shapes in Figure~\ref{fig_fpt}.  The
geometries were generated from isosurfaces in a technique commonly referred to
as meatballs~\cite{Blinn1982}.  We used several spheres to form the tubular
region and another sphere for the bulb region.  The geometry was obtained as
the isosurface using the level set function of the form $f(x) = \sum_{k}
\phi_k(|x - x_k|)$, where $\phi(r) = c_k/r^2$.  The surface was triangulated at
a refined spatial resolution to obtain around $N=29,000$ nodal points for each
shape.}

\rev{First passage times are computed for when a
protein starts at the top of the head region with the bulb-like shape and
reaches the boundary.  To investigate the role played by the neck region
compared to the influence of the other aspects of the geometry, three cases are
considered for the boundary location.  These are boundaries located (i) at the
base of the spine, (ii) in the middle of the neck, and (iii) within the
spherical head just above the neck.  These results indicate that compared to
the other geometric features, the width of the neck region plays the dominant
role in the first passage times.  As the neck narrows, the first-passage time
for the protein diffusion significantly increases from these changes in the
geometry, see Figure~\ref{fig_fpt}.
}

\begin{figure}
\centering
\includegraphics[width=0.45\textwidth]{./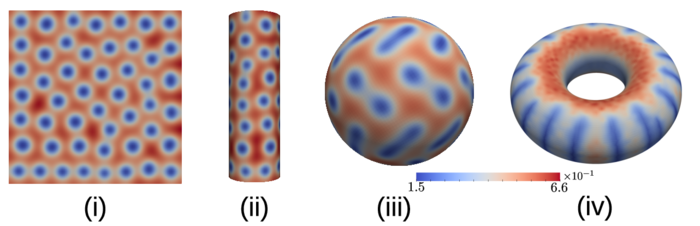}
\caption{Role of Topology and Geometry on Pattern Formulation.  The Gray-Scott
reaction-diffusion shows different patterns depending on the shape of the 
surface.  Shown are the cases of (i) square, (ii) cylinder, (iii) sphere, 
and (iv) torus. 
The surfaces have area one and when there are edges we use reflecting 
boundary conditions.  For the shapes (i)-(ii) spotted patterns emerge
having roughly a hexagonal pattern.  For spherical topology (iii) a 
regular hexagonal pattern without defects is no longer possible,
and instead striped patterns mix with spots.  For the case of
a torus (iv), which can sustain a hexagonal pattern in principle, 
the heterogeneity of the curvature appears to drive the formation of 
localized stripe-like patterns.  These results indicate that both 
the geometry and topology can significantly impact pattern formation.
} 
\label{fig_gray_scott_geo}
\end{figure}

\begin{table}
    \centering
    \begin{tabular}{l | l | l}
        Parameter &     Description &                           Value \\
        \hline\hline
        $\epsilon_u$ &  diffusion scale for $u$ &            $1.0\times 10^{-2}$\\
        $\epsilon_v$ &  diffusion scale for $v$ &            $5.0\times 10^{-3}$\\
        $r_0$ &           diffusion cut-off radius &               1.0\\
        $a$ &           $u+2v\rightarrow 3v$ reaction rate &    $4.0\times 10^{-2}$\\
        $b$ &           $v\rightarrow p$ reaction rate &        $6.0\times 10^{-2}$\\
        $T_\text{sim}$ &           total time duration &                            $5.0\times 10^3$\\
        $\Delta t$ &    time-step &                         $1.0\times 10^{-1}$\\
        $n_{rk}$ &      Runge-Kutta steps &                     100
    \end{tabular}
    \caption{Parameters for Gray-Scott Reaction-Diffusion 
    System in Figure~\ref{fig_gray_scott_geo}.}
    \label{tab:GS_shapes}
\end{table}

\begin{figure} 
\centering 
\includegraphics[width=0.45\textwidth]{./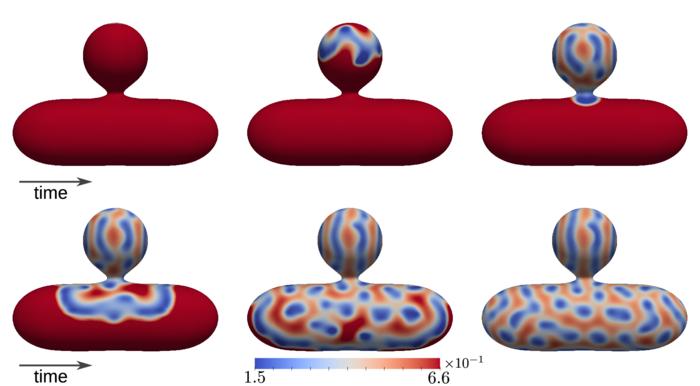}
\caption{\rev{Evolution of Turing Instabilities for Gray-Scott Reactions.  Shown
is the evolution of the reaction-diffusion pattern formation process on
the dendritic spine shape having the narrowest neck, label $A$ (notation the same as in Figure~\ref{fig_fpt}).  The pattern
progresses first by forming within the bulb-like head region, and then 
spreads through the neck to form patterning on the tubular part of 
the domain.  The time-steps are shown 
for $n=0,400,800,1200,1600,2000.$}}
\label{fig_gray_scott_evolve} 
\end{figure}

\section{Simulations} 
\label{sec_simulations}

\subsection{Turing Instabilities on Curved Surfaces} 
\label{sec_rd}

We show how our discretization approach can be used to develop
methods for performing simulations of general reaction-diffusion 
processes on curved surfaces of different shapes.  
We consider reaction-diffusions, 
such as the pattern formation process based
on Turing's instability mechanism~\cite{Turing1952},
where the geometry and topology of the domain can impact 
the patterns that are obtained~\cite{Britton2005,Murray2013}.
Consider the system with two molecular species 
with concentrations $u,v$ 
\begin{equation} 
\pdv{u}{t} =D_u\Delta_{\mathcal{M}}u + f(u,v),
\hspace{0.3cm} \pdv{v}{t} =D_v\Delta_{\mathcal{M}}v + g(u,v).
\end{equation}
The diffusivities $D_u$ and $D_v$ will in general
be different. Through the non-linear reaction terms,
the difference in diffusivity can cause the 
homogeneously mixed concentrations to become 
unstable resulting in pattern generation~\cite{Turing1952,Britton2005}.

We consider Gray-Scott reactions~\cite{Gray1984}, which can exhibit 
different patterns depending on the initial conditions and
interactions with noise and other 
perturbations~\cite{Atzberger_RD_2010,Britton2005}.
For the Gray-Scott reactions~\cite{Gray1984}, the terms 
are $f(u,v) = -uv^2 + a(1 - u)$
and $g(u,v) = uv^2 - (a + b)v$.  The 
rate parameters $a,b$ are
for the chemical reactions $u + 2v\xrightarrow[]{a} 3v$ and 
$v \xrightarrow[]{b} p$.

\begin{table}
    \centering
    \begin{tabular}{l | l | l}
        Parameter &     Description &                           Value \\
        \hline\hline
        $\epsilon_u$ & diffusion scale for $u$  &            $1.0\times 10^{-1}$\\
        $\epsilon_v$ & diffusion scale for $v$  &            $5.0\times 10^{-2}$\\
        $r_0$ & diffusion cut-off radius &               1.0\\
        $a$ & $u+2v\rightarrow 3v$ reaction rate &    $4.0\times 10^{-2}$\\
        $b$ & $v\rightarrow p$ reaction rate &        $6.0\times 10^{-2}$\\
        $T_\text{sim}$ & total time duration &                            2.0\\
        $\Delta t$ &  time-step &                         $1.0\times 10^{-3}$\\
        $n_{rk}$ &    Runge-Kutta steps &                     1
    \end{tabular}
    \caption{Parameters for Gray-Scott Reaction-Diffusion 
    System in Figures~\ref{fig_gray_scott_evolve} and ~\ref{fig_gray_scott_spine}.} 
    \label{tab:GS_DS}
\end{table}
\begin{figure} \centering
\includegraphics[width=0.499\textwidth]{./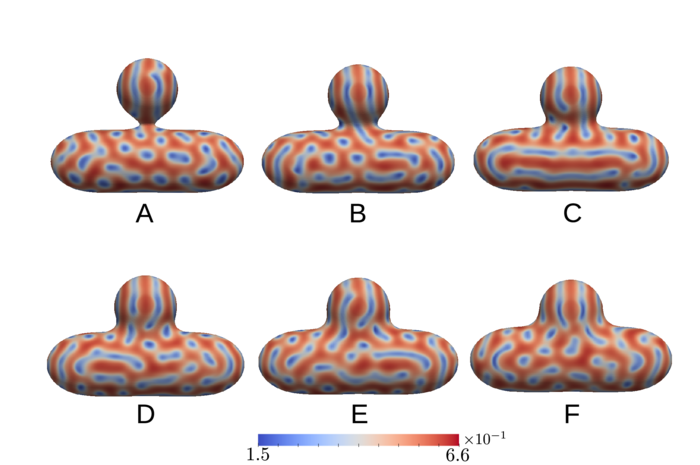}
\caption{Gray-Scott Reaction-Diffusion Pattern Formation on Dendritic Spine
Neck Shapes.  \rev{Starting with the steady-state $(u^*,v^*)=(1,0)$, given this
  is stationary and the evolution is deterministic, we apply a perturbation in
  the shape of a circular patch on the top of the head region with the
  bulb-like shape with $(u,v) = (0.5\pm10\%,0.25\pm10\%)$.  The perturbations
  are independent uniform variates at each lattice site. The reactions
  settle down into the patterns shown for time-step $2000$. Striped patterns
  manifest near the base of the head region and mixed spotted patterns within
  the tubular region.}}
\label{fig_gray_scott_spine} 
\end{figure}

We start with a homogeneous steady-state solution for the system 
$(u,v) = (1,0)$, and add small perturbations based on 
uniform random noise $\pm10\%$ of the steady-state 
at each location $x_i$ within a region $\Gamma \subseteq \Omega$. 
Throughout our simulations, we choose $a = 0.04$, $b=0.06$,
guided by the parameters of the phase diagram of~\cite{Pearson1993}.
We investigate how the Gray-Scott patterns are influenced by
different geometries and topologies.

\rev{Using our Markov-Chain discretizations, we develop numerical methods for 
the spatial-temporal evolution of the concentration fields.
By equation \eqref{diffprob}, we obtain a stochastic matrix $M$. 
The overall idea is to model 
the concentrations by $u(x,t) = (n_u/A) p_u(x,t)$,
$v(x,t) = (n_v/A) p_v(x,t)$, where $A$ is the surface area.
The $\mb{p} = [\mb{p}_u,\mb{p}_v]$, with $[\mb{p}^n]_i = p(x_i,t_n)$, 
are obtained from the probability evolution for the Markov-Chain
given by $\mb{p}^n = \mb{p}^{n+1}M $.  In this way, we obtain 
a model that approximates the diffusive evolution of the continuous 
concentration fields.}  

\rev{To model the full reaction-diffusion system 
in our simulations, we split
the time-step integration into a diffusive step and a reaction step. 
For the diffusive step $\Delta{t}$, we use our Markov-Chain 
discretization in Section~\ref{sec_detailed_balance} 
to update the concentration fields $q=(u,v)$.
This provides a mean-field model for the concentration field, with 
$q^{n + 1/2} = \Psi_1(q^n)$ corresponding to $u^{n+1/2} = u^n M, v^{n+1/2} = v^n M$.  
The $M$ is the transition matrix for the surface diffusion
derived in Section~\ref{sec_detailed_balance}.
For the reactions,
we use the smaller time-steps $\delta{t} = \Delta{t}/n_{rk}$ which are 
integrated with the fourth-order
Runge-Kutta method RK4~\cite{Burden2010,Iserles2008}.  This gives 
$q^{n + 1} = \Psi_2(q^{n + 1/2})$.
We alternate between these steps in our simulations to obtain 
$q^{n+1} = \Psi(q^n) = \Psi_2(\Psi_1(q^n))$.}  

\rev{ As we discussed in Section~\ref{sec_markov_chain}, the $M$ approximates
the Laplace-Beltrami operator.  This ensures with enough spatial resolution the
evolution will approximate diffusion on a surface.  One way to view our model
is as a meshfree way to discretize the surface reaction-diffusion PDEs.  This
provides simulation methods corresponding to the mean-field concentration
fields of the protein species.  }

\rev{Our approaches for the diffusion 
also can be used more generally to go beyond the mean-field model by
using discrete particle simulations with individual 
random walkers.  These would have
the same distribution as our continuum model when 
taking appropriate limits.  Our methods allow for 
either (i) to perform stochastic 
simulations of random walks tracking individual particles 
to account for discrete spatial-temporal fluctuations from 
finite number effects, or (ii) to perform deterministic 
simulations tracking the probability distribution of the
walkers.  For the reaction-diffusion studies we use approach 
(ii).  For later studies involving phase separation,
we use approach (i).}

We consider the Gray-Scott reaction-diffusion process on the following
geometries (i) flat sheet with boundaries, (ii) finite cylinder with
boundaries, (iii) surface of a sphere, (iv) surface of a torus, 
see Figure~\ref{fig_gray_scott_geo}.  The surfaces have area one and when 
there are boundary edges we use reflecting boundary conditions.  For the shapes 
(i)-(ii) spotted patterns emerge having roughly a hexagonal pattern.  
For the spherical topology (iii) a regular hexagonal pattern 
without defects is no longer possible, and instead striped 
patterns mix with spots.  For the case of a torus (iv), 
which can sustain a hexagonal pattern in principle, 
the heterogeneity of the curvature appears to drive the formation of 
localized stripe-like patterns.  The results obtained with our
methods indicate that both the geometry 
(curvature and scale effects) and the topology can
impact significantly the pattern formation process.

\rev{We remark that Turing instabilities as a mechanism for pattern formation
were originally motivated by a linear stability analysis on periodic domain
performed by A. M. Turing in~\cite{Turing1952}.  Extending the analysis to
general manifolds is non-trivial given the challenges of analyzing even for the
linearized dynamics the eigenfunctions and eigenvalues associated with the
Laplace-Beltrami based-diffusions and roles of curvature and topology.  Some
work in this direction has been done in the literature in the case of
specialized geometry, such as the sphere~\cite{Varea1999,JamiesonLane2016}.
The results of our methods for Turing instabilities yield patterns comparable
to those seen in such previous studies.  In the spherical case we find that
there is a transition from striped to spotted patterns in the sphere
size~\cite{Hinz2019}.  We also find that the hexagonal spotted patterns with
defects manifest with arrangements similar
to~\cite{Varea1999,JamiesonLane2016}.  }

We also consider more complicated shapes for the Gray-Scott reactions 
given by our mechanistic dendritic spine geometries with different 
neck sizes, see Figures~\ref{fig_gray_scott_evolve}, \ref{fig_gray_scott_spine}.  
These shapes consist of a bulb-like head region (representing the spine)
which is connected by a neck structure to a tube-like region 
(representing the dendrite).  We vary the shapes by changing the 
thickness of the neck-like structure joining the two regions. 
The evolution of 
the pattern formation process for the geometry with the narrowest 
and widest neck shapes are shown in 
Figure~\ref{fig_gray_scott_evolve}.
For the narrowest necks, the confinement in
the head region appears to result in stripe-like patterns that 
also extend through the neck.  In the larger tube-like region, 
the spotted patterns form inter-mixed with the stripe pattern, 
see Figure~\ref{fig_gray_scott_spine}.  These results indicate how
local regions can exhibit different patterning depending on the 
local geometry.

\subsection{Dendritic Spines and Protein Kinetics} 
\label{sec_phase_sep}
\label{sec_spine}

\begin{figure} 
\centering 
\includegraphics[width = 0.45\textwidth]{./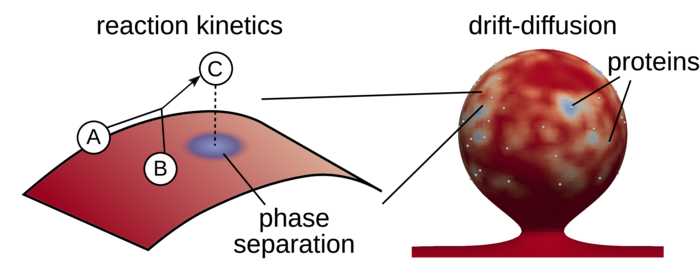} 
\caption{Chemical Kinetics and Phase Separation.  We consider 
chemical kinetics where molecular species $A,B$ diffuse freely
and react to form $C$ by $A + B \rightarrow C$.  The molecular 
species $C$ can nucleate phase $q=-1$.  The drift-diffusion
of $C$ is coupled to the local phase by equation \ref{protein_phase_int}.  This
results in a bi-directional confinement force from the phase 
field $q$ acting on $C$ to keep within regions with $q=-1$.  
This also results in an equal-and-opposite force acting on 
the phase field from equation \ref{protein_phase_int} driving nucleation
and phase separation.
} \label{fig_model_spine}
\end{figure}

We develop a mechanistic model for dendritic spines to 
investigate the dependence of protein transport and kinetics
on geometry and heterogeneities arising from
phase separation.
Our investigations are motivated from experiments on SynGAP
and PSD-95 proteins, where phase separation may play a role in driving receptor
organization~\cite{Zeng2016}.  Phase separation can arise 
from nucleation or modulating local concentrations in the cell membrane,
which is a heterogeneous mixture of lipids, 
proteins, and other small molecules~\cite{Simons2004,Lingwood2010,Alberts2014}. 

In our mechanistic model, we investigate how the spine geometry and phase
separation can influence reaction kinetics. We start with a two species system.
The $A$ and $B$ molecular species are tracked at the individual particle level.
In the absence of phase separation, this would diffuse freely over the membrane
surface.  As a starting point, we study the basic chemical kinetics $A + B
\rightarrow C$.  The discrete particles react with probability $p$ when
coming within a reaction distance $r < r_0$.  This is motivated by Smoluchowski
reaction kinetics~\cite{Smoluchowski1918,Gillespie2013,
Isaacson2013,IsaacsonPeskin2011}.  

\begin{figure} 
\centering
\includegraphics[width = 0.48\textwidth]{./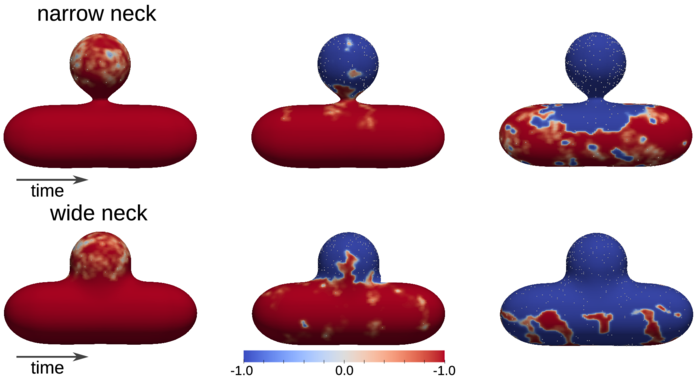} 
\caption{\rev{Dendritic Spine Model: Role of Geometry in Phase Separation.  The
dendritic spine model consists of a head region connected by a neck
region to a tubular domain.  The active proteins $A,B$ in the model all
originate at the top of the head region.  The phase separation occurring
in the two cases is shown for (i) a narrow neck constricting the protein
diffusion and spread of the phase separation (top), (ii) a wide neck through
which the proteins can readily diffuse and the phase can separate (bottom).
The phase $q=1$ is red and $q=-1$ is blue, shown are time-steps $50$, $300$, and
$2000$.  At around time-step $\sim 300$ (middle), the the protein diffusion and
phase separation comes into contact with the neck region.  In the narrow
case, the neck region acts effectively like an entropic barrier for the
diffusion arising from the geometry.}}
\label{fig_phase_evolve} 
\end{figure}

\begin{table}[]
    \centering
    \begin{tabular}{l|l|l}
         Parameter & Description & Value \\
         \hline\hline
         $\alpha_1$ &       GL phase interfacial tension &      $1.0\times 10^6$\\
         $\alpha_2$ &       GL phase polarity  &       $1.0\times 10^4$\\
         $\alpha_3$ &       GL protein-phase coupling &       $1.0$\\
         $a$        &       protein phase radius & $0.1$\\
         $N$        &       $A$, $B$ initial protein count      &       1000\\
         $\epsilon$ &       diffusion scale for proteins &    $0.1$\\
         $r_0$        &       diffusion cut-off radius &   $0.5$\\
         $p$        &       $A + B\to C$ reaction probability & $0.01$\\
         $T_\text{sim}$        &       total time duration              & $200.0$\\
         $\Delta t$ &       time-step           & $0.1$\\
         $n_{rk}$ &                  Runge-Kutta steps &         $100$
    \end{tabular}
    \caption{Parameters for Dendritic Spine Phase Separation.}
    \label{tab:phase_sep}
\end{table}

To model such effects at a coarse-level, we track a continuum phase 
field $\phi = \phi(x,t)$ which can 
couple to the local diffusive motions of the discrete protein particles.  We consider in 
our initial model the case of a local order parameter associated with 
phase separation based on Ginzburg-Landau (GL) 
theory~\cite{Hohenberg2015,Ginzburg1950}. 
Other phase-separation phenomena and approaches also could be 
considered in principle within our framework, such as the second-order 
Allen-Cahn~\cite{AllenCahn1979} or the conservative fourth-order 
Cahn-Hilliard~\cite{Cahn1958}, with additional 
work on the numerical methods for the operators on curved 
surfaces and coupling with the Markov-Chain 
discretization~\cite{Atzberger_GMLS_Hydrodynamics_2020}.
For notational convenience, we will denote 
$q \simeq \phi(x)$, so $q: \mathcal{M} \rightarrow \mathbb{R}$
for the phase variable map for the curved surface $\mathcal{M}$.
This gives at lattice site $i$,
$q_i \simeq \phi(x_i)$. 
The GL functional is $\hat{V}_0[\phi] = \int_{\mathcal{M}} \left(\nabla \phi(x)\right)^2 
+ \hat{V}_2[\phi](x) dx$. The first term accounts for a line tension between phases 
and the second term drives the phase to $\pm 1$ with local energy density 
$\hat{V}_2[\phi] = K (1 - \phi^2)^2$.
To capture similar effects as $\hat{V}_0[q]$, we use the simplified discrete model
\begin{equation} 
  V_0[q]=\frac{1}{n^2}\sum_{i}\sum_{j\neq i}
V_{1}[q_{i},q_{j}] + \frac{1}{n}\sum_{i}V_{2}[q_{i}],
\label{V}
\end{equation} 
with
\begin{equation} 
V_{1}[q_{i},q_{j}]=\alpha_{1}
  W_{ij}(q_{i}-q_{j})^2, \hspace{0.3cm}
V_{2}[q]=\alpha_{2} (1-q^{2})^{2}.
  \label{V1_V2}
\end{equation} 
The first term models the interfacial tension, where the
coefficient
$W_{ij} = W(r_{ij})$ decays in $r_{ij} = |x_i - x_j|$.
The second term models the order parameter which locally
is at a minimum for $q = \pm 1$.  The $\alpha_{1}, \alpha_2 > 0$ 
control the strength of the interfacial tension and 
the phase variable ordering.  We use for the decay coefficient
$W_{ij}=\exp(-r_{ij}^{2}/\varepsilon^{2})$
which is truncated for large $r_{ij} \gg \epsilon$.

To couple protein motions $\mb{X}(t)$ and their ability to induce 
local phase ordering in $q$, we use the free energy term
\begin{equation}
  V_3^{(i)}[q,X]=\alpha_3(q_i- (-1))^2\eta_a(x_i-X).\label{protein_phase_int}
\end{equation}
The $\alpha_3>0$ controls the strength of this coupling.
The $\eta_a$ gives a kernel function with support 
localized around the protein location $X = X(t)$.
This term drives the phase field toward $q=-1$ in a 
region around the protein location $X(t)$.
We use $\eta_a(r) = \exp(-r^2/a^2)$ and 
\begin{equation}
  V_3[q,X]= \frac{1}{n} \sum_i V_{3}^{(i)}[q,X].
\end{equation}
The free energy for full system $(q,X)$ with protein configuration $X$ 
and phase-field $q$ is  
\begin{equation}
V[q,X] = V_0[q] + \sum_k V_3[q,X^k].
\label{full_eng}
\end{equation}
The discrete protein positions $[\mb{X}]_k = X^k$ are updated using 
the Markov-Chain discretization 
in equation \eqref{equ:M} for the drift-diffusion dynamics.
The energy for the protein configuration is given 
by $U(\mb{X}) = V[q,\mb{X}]$.
The time evolution of the phase field is given by
\begin{equation} 
\frac{d q_{i}}{dt}=-\nabla_{q_i}V[q,X].
\label{phase_evolve}
\end{equation}
This is discretized and integrated over sub-time-steps 
$\delta{t} = \Delta{t}/n_{rk}$
using Runge-Kutta RK4~\cite{Burden2010,Iserles2008}.  
Modeling the discrete system using the common free energy $V$
ensures the bi-directional coupling gives 
equal-and-opposite forces between the phase-field $q$ and the 
proteins $X$.  

\begin{figure}
\includegraphics[width=0.4\textwidth]{./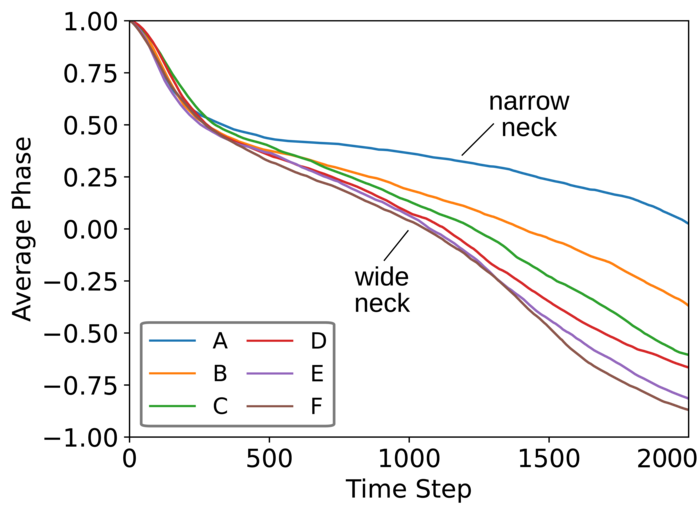}
\caption{Dendritic Spine Model: Average Phase.  The proteins $C$
can nucleate phase with $q=-1$.  Shown are how the geometry impacts
phase evolution for five different sizes for the neck region with
labels A-F ordered from narrowest to widest as in Figure \ref{fig_gray_scott_spine}.  
The phase separation is found to slow down considerably for the 
narrowest neck (label A) relative to the widest neck (label F).
The differences emerge around around time-step $\sim 300$, when 
the bulb-like head region is saturated.  We see over the 2000 time-steps
the widest neck results in almost the entire domain including the
tubular region converting to phase $q=-1$.  For the narrowest neck
shape, we see at time-step $2000$ the phase separation is primarily 
isolated to the head head region and surrounding area, while the 
rest of the domain remains primarily $q=1$.  \rev{The average is taken over the entire surface.}  }
\label{fig_avg_phase} 
\end{figure}

We perform simulations to investigate how the dendritic spine geometry 
impacts the protein reaction kinetics and phase-separation,
see Figure \ref{fig_model_spine} and Table~\ref{tab:GS_shapes}.
The proteins couple to the phase separation 
by having the ability to drive nucleation of local patches with
$q=-1$ near the protein location $X(t)$.  The geometries were chosen 
with the neck region taking on shapes varying from narrow to wide,
see Figure \ref{fig_gray_scott_geo}.  The protein species $A,B$ are modeled 
as originating in the top head region of the spine.  This could 
arise for instance if these species correspond to tracking proteins 
only after they have 
become activated in this region.  The proteins can then diffuse,
and may induce local phase-separation through the coupling $V_3$.
The proteins can also interact to form a complex which is tracked by
molecular species $C$.  

We show how the phase 
separation proceeds over time as the size of the neck region 
varies in Figure \ref{fig_phase_evolve}.  As the neck 
region narrows, the protein diffusion and the phase separation 
are restricted to be localized in the head region.  
As the neck becomes wide, the protein diffusion and phase separation
can readily proceed to spread more rapidly into the tubular region,
see Figures~\ref{fig_phase_evolve}, \ref{fig_avg_phase}.  Further
effects arise from the coupling of the proteins with the local phase.

\begin{figure}
\includegraphics[width=0.4\textwidth]{./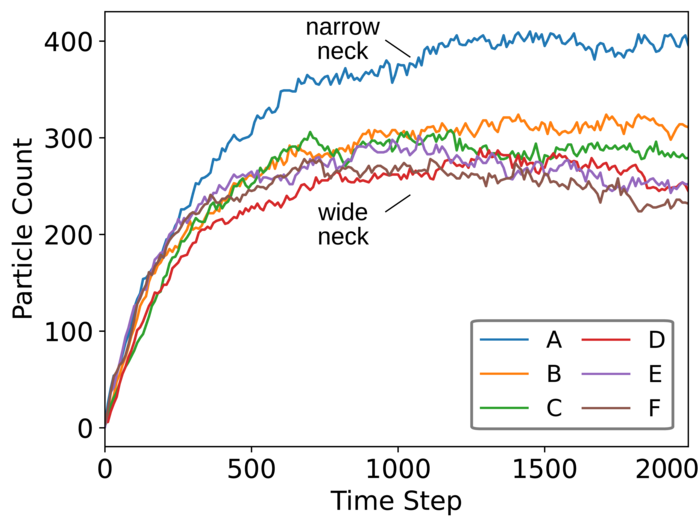}
\caption{Dendritic Spine Model: Proteins $C$ in the Head Region.  
Shown are the number of proteins $C$ within the bulb-like head region over time.
We see the reactions producing $C$ proceed almost independent of the geometry
up to time-step $\sim 300$.  Afterwards, the shapes with the narrowest necks
results in far more $C$ proteins being produced and retained within the 
head region.  This shows how phase-separation can serve to enhance 
retaining proteins near the top of the spine.
} \label{fig_C_count} 
\end{figure}

The protein dynamics are impacted by both the geometry
and the local phase.  In our simulation studies, the 
coupling coefficients are taken to 
be $\alpha_3^A=\alpha_3^B = 0$ and $\alpha_3^C = 1.0$.
For this case $C$ can nucleate phase separation nearby since it 
prefers the phase field have $q=-1$.  
This also results for $C$ experiencing a trapping force within regions 
with $q=-1$, from the free energy in equation \eqref{protein_phase_int}.
This restricts the movements of $C$.  The evolution of the creation of 
of molecular species $C$ is shown in Figure~\ref{fig_C_count}.  
The number of $C$ formed and retained in the head region is significantly
impacted by the neck geometry and phase separation.  When the neck is 
narrow, the confinement of $A,B$ and the phase separation to the head 
region, both enhances the creation of $C$ by more frequent $A-B$ encounters
and in $C$'s retention to the head region from the phase trapping forces.
As the neck becomes wide, the $A,B$ can diffuse more freely and when
the phase does nucleate it can more rapidly spread throughout 
the whole domain.  This results in a much smaller number of 
$C$ being created and retained in the head region.

Our basic mechanistic model shows that the morphology 
and phase separation can interact to serve together 
to enhance retaining proteins near the top of the dendritic spine.
These results are expected to carry over to more complicated 
chemical kinetic systems.  The further interactions between
diffusion, kinetics, phase separation, and the geometry 
can regulate in different ways the spatial arrangements and the 
local effective
rates of reactions in curved cell membranes.

\subsection{Protein Clustering and Role of Phase Separation}
\label{sec_clustering_phase_sep}

\begin{table}[]
    \centering
    \rev{
    \begin{tabular}{l|l|l}
         Parameter & Description & Value \\
         \hline\hline
         $\alpha_1$ &       GL phase interfacial tension &      $1.0\times 10^6$\\
         $\alpha_2$ &       GL phase polarity  &       $1.0\times 10^4$\\
         $\alpha_3$ &       GL protein-phase coupling &       0 -- 10.0\\
         $a$        &       protein phase radius & $0.1$\\
         $N$        &       $A$, $B$ initial protein count      &       1000\\
         $\epsilon$ &       diffusion scale for proteins &    $0.1$\\
         $r_0$        &       diffusion cut-off radius &   $0.5$\\
         $p$        &       $A + A\to 2A$ reaction probability & $0.001$\\
         $q$        &       $A \to B$ decay probability & $0.001$\\
         $T_\text{sim}$        &       total time duration              & $200.0$\\
         $\Delta t$ &       time-step           & $0.1$\\
         $n_{rk}$ &                  Runge-Kutta steps &         $100$
    \end{tabular}
    }
    \caption{\rev{Parameters for the dendritic spine phase separation simulations.}}
    \label{tab:ABAA2A}
\end{table}

\rev{ We investigate how phase separation can impact protein kinetics and
clustering.  Consider the case of states where proteins in state $A$ are in an
active state receptive to binding and proteins in state $B$ are dormant.
Consider the competing reactions for forming dimers}
\rev{
\begin{equation}
    A + A\xrightarrow{p} 2A,\qquad A \xrightarrow{q} B.
\end{equation}
}
\rev{The $p$ and $q$ denote here the reaction probabilities.  The
protein-phase-field interactions are modeled using our approach in Section
\ref{sec_phase_sep}.  We take the phase field to impact the proteins in states
$A$ and $2A$ with the same strength, $\alpha_3^A=\alpha_3^{2A}$.  In contrast,
we take proteins in the dormant state $B$ to not interact with the phase field,
$\alpha_3^B=0$.  To study the effects of the phase field, we vary $\alpha_3^A$
over the range from $0.0$ (no interaction) to $1\times 10^1$ (strong
interaction).  We investigate in simulations the number of dimers $2A$ proteins
that are formed over time for different levels of phase field interaction
$\alpha_3^A$ and as the phase separation progresses.  The $2A$ counts are given
in Figure~\ref{fig_2A_count}.  We find in the strongest coupling cases the
phase field can significantly impact the reaction kinetics promoting clustering
and dimer formation.  The details of the
parameters used in our simulations are given in Table \ref{tab:ABAA2A}.}

\begin{figure}
\includegraphics[width=0.8\columnwidth]{./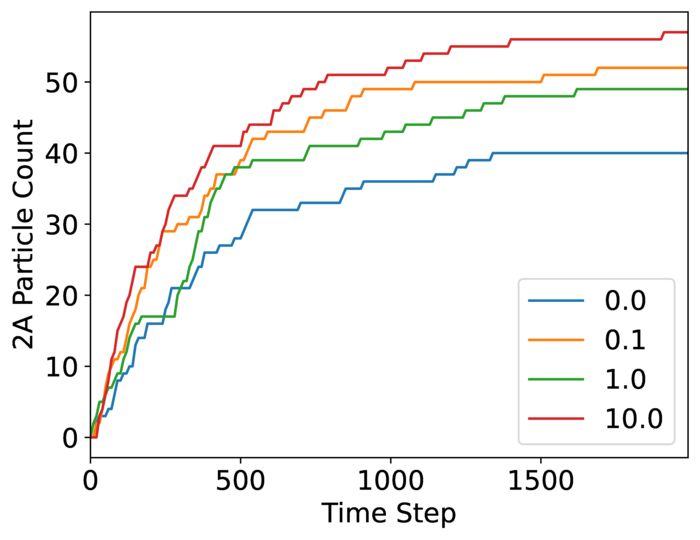}
\caption{Dendritic Spine Model: 
Shown are the number of proteins $2A$ within the bulb-like head region over
time.  We see that increasing the interaction strength $\alpha_3$ (legend)
between proteins $A$ and the phase field promotes the formation of 
clusters $2A$.
} \label{fig_2A_count} 
\end{figure}

\rev{The results show that phase-field separation can play significant roles
impacting reaction kinetics within membranes.  In this case promoting the
formation of clusters.  The underlying mechanisms has to do with the formation
for proteins of local environments with initially small phase-separated domains
that serve to confine the diffusion of the proteins.  Relative to free
diffusion over the entire membrane surface, the phase confinement results in
diffusion with the phase domain and more frequent encounters between
co-confined proteins.  In this way, the binding kinetics are enhances relative
to the free diffusion.  From the greater slope in the increase in the number of
clusters formed for case $\alpha_3^A=1\times 10^1$ vs $\alpha_3^A=0$, we see
the phases have the greatest impact when there are small phase domains.  The
smaller domains results in more frequent encounters between confined proteins.
As the phase separation progresses the domains become larger and merge with the
effect of the phase-coupling on the reactions become less pronounced.}

\rev{These results show how phase separation and domain formation can
be utilized within cell membranes to significantly augment the reaction kinetics
without changing the reaction characteristics of the individual proteins.  These simulation methods provide further ways to investigate the collective kinetics of proteins and the roles played by phase composition and the membrane geometry.}


\section{Conclusions}
We have developed biophysical models 
for investigating protein kinetics within
heterogeneous cell membranes of arbitrary
shape.  The introduced methods allow for 
studying at the continuum and 
discrete protein-level the drift-diffusion
dynamics and reaction kinetics taking into 
account the roles of geometry and coupling 
to membrane phase separation.  We have shown
that coupling of kinetics to 
phase separation and geometry can 
significantly enhance or inhibit 
reactions and cluster formation 
processes.  These initial results 
suggest a few mechanisms
by which biological membranes may 
utilize phase separation and geometry 
to drive kinetics and organization of 
proteins within cell membranes. \\

\section{Acknowledgements}
The authors P.J.A. and P.T. would like to acknowledge support for 
this research from the grants NSF Grant DMS-1616353 and 
DOE Grant ASCR PHILMS DE-SC0019246.  The author 
T.A.B. would like to 
acknowledge support from grant NIH R37MH080046.  P.T. 
would like to acknowledge a Goldwater Fellowship and 
College of Creative Studies Summer Undergraduate Fellowship.  
P.J.A. would like to acknowledge a hardware grant from Nvidia.
Authors also would like to acknowledge UCSB Center for 
Scientific Computing NSF MRSEC (DMR1720256) and 
UCSB MRL NSF CNS-1725797.  

\bibliography{paper_database}

\appendix

\newpage
\clearpage

\section{Results on Statistics of Markov-Chains and Backward Equations}
\label{sec_fpt}

We discretize the particle drift-diffusion dynamics on the surface using
a Markov-Chain with jump rates $M_{ij}$.  The First Passage Time (FPT) 
and other statistics can be computed efficiently 
from the Markov-Chain without the need for Monte-Carlo sampling
when the discrete state space is not too large.  We show 
how results similar to the Backward-Kolmogorov PDEs~\cite{Oksendal2000}
can be obtained for our discrete Markov-Chains.

\setcounter{thm}{0}
\begin{thm}
Let $\mathbf{u}$ be a column vector with components in $i$ associated with
the statistics
\begin{equation}
u_i^{(n)} =
\Ex\left[f(X^{(N)})+\sum_{t=n}^{N-1}g(X^{(t)},t)\ \biggr\rvert\ X^{(n)}=x_i\right].
\end{equation}
The statistics $\mb{u}^{(k)}$ satisfy  
\begin{align}
\mathbf{u}^{(n-1)}&=M\mathbf{u}^{(n)}+\mathbf{g}^{(n-1)}\\
\mathbf{u}^{(N)}&=\mathbf{f}. 
\end{align}
The $M$ is the right-stochastic matrix of the Markov-Chain. 
For a given choice of functions $f,g$, 
we collect the values 
$[\mb{f}]_i = f(\mb{x}_i)$ and $[\mb{g}^{(\ell)}]_i = g(\mb{x}_i,\ell)$
as column vectors $\mb{f},\mb{g}^{(\ell)}$.
\end{thm}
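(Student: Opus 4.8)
The plan is to prove the recursion by conditioning on the first step of the Markov chain and using the tower property of conditional expectation. First I would write down the statistic at time $n-1$,
\begin{equation}
u_i^{(n-1)} = \Ex\left[f(X^{(N)}) + \sum_{t=n-1}^{N-1} g(X^{(t)},t)\ \middle|\ X^{(n-1)}=x_i\right],
\end{equation}
and split off the $t=n-1$ term from the sum. Since $X^{(n-1)}=x_i$ is given, the term $g(X^{(n-1)},n-1)$ equals $g(x_i,n-1) = [\mb{g}^{(n-1)}]_i$ deterministically and pulls out of the expectation. What remains inside is $\Ex[f(X^{(N)}) + \sum_{t=n}^{N-1} g(X^{(t)},t)\mid X^{(n-1)}=x_i]$.

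Next I would apply the tower property, conditioning additionally on $X^{(n)}$: by the Markov property the conditional law of $(X^{(n)},X^{(n+1)},\dots,X^{(N)})$ given $X^{(n-1)}=x_i$ depends only on $x_i$, and given $X^{(n)}=x_j$ the future statistic is exactly $u_j^{(n)}$. Hence
\begin{equation}
\Ex\left[f(X^{(N)}) + \sum_{t=n}^{N-1} g(X^{(t)},t)\ \middle|\ X^{(n-1)}=x_i\right] = \sum_j \Pr\!\left(X^{(n)}=x_j \mid X^{(n-1)}=x_i\right) u_j^{(n)} = \sum_j M_{ij}\, u_j^{(n)},
\end{equation}
using that $M_{ij}$ is the one-step transition probability from $x_i$ to $x_j$. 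Combining the two displays gives $u_i^{(n-1)} = \sum_j M_{ij} u_j^{(n)} + [\mb{g}^{(n-1)}]_i$, which in vector form is $\mb{u}^{(n-1)} = M\mb{u}^{(n)} + \mb{g}^{(n-1)}$. The terminal condition $\mb{u}^{(N)}=\mb{f}$ is immediate from the definition: when $n=N$ the sum is empty and $u_i^{(N)} = \Ex[f(X^{(N)})\mid X^{(N)}=x_i] = f(x_i)$.

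The argument is essentially a discrete analogue of the derivation of the backward Kolmogorov equation, so there is no deep obstacle; the main point requiring care is the correct bookkeeping of which index of $M$ is summed over — i.e. verifying that conditioning on the first step produces $M$ acting on the left (row index $i$ free, column index $j$ summed) consistent with $M$ being right-stochastic, rather than its transpose. I would also note explicitly that the time-inhomogeneous case (if the $M$ were allowed to depend on $n$) goes through identically; here $M$ is fixed, so no additional hypotheses are needed beyond the Markov property and finiteness of the state space, which guarantees the expectations are well defined.
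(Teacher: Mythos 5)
Your proof is correct, and it takes a genuinely different route from the paper's. You derive the recursion directly by first-step analysis: split off the $t=n-1$ term of the sum, then condition on $X^{(n)}$ and invoke the Markov property so that the inner conditional expectation is exactly $u_j^{(n)}$, giving $u_i^{(n-1)}=\sum_j M_{ij}u_j^{(n)}+g(x_i,n-1)$. The paper instead first establishes the explicit closed-form solution
\begin{equation}
\mb{u}^{(n)}=M^{N-n}\mb{f}+\sum_{t=n}^{N-1}M^{t-n}\mb{g}^{(t)},
\end{equation}
by expanding the expectation over the marginal distributions $P^{(t)}_{ij}$ and using Chapman--Kolmogorov to write $P^{(t)}=P^{(n)}M^{t-n}$, and only then peels off one factor of $M$ to read off the recursion. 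Your one-step conditioning argument is the more standard derivation of a discrete backward equation, is shorter, and avoids writing the marginals at all; the paper's approach costs a little more bookkeeping but yields the explicit solution formula as a byproduct, which is itself useful for computation. Your attention to the index convention (row $i$ free, column $j$ summed, consistent with $M$ being right-stochastic and matching the paper's forward evolution $\mb{p}^{(\ell+1)}=\mb{p}^{(\ell)}M$ for row vectors) is exactly the point where such arguments most often go wrong, and you have it right. The terminal condition is handled identically in both proofs.
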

\begin{proof} 
For the initial condition $X^{(n)}=x_i$, let the matrix
$P^{(m)}$ have the components $P_{ij}^{(m)}=p_j^{(m)}$.
For $m \geq n$, 
the $p_j^{(m)}$ is the solution of $\mb{p}^{(\ell+1)} = \mb{p}^{(\ell)} M$,
starting with $p_j^{(n)} = [\mb{p}]_{j}^{(n)} 
= \delta_{ij}$.  The $\delta_{ij}$ is the Kronecker 
$\delta$-function.
For $m=n$, we have $P_{ij}^{(n)}=\delta_{ij}$.
This gives
\begin{equation}
\begin{split}
u_i^{(n)}&=\Ex\left[f(X^{(N)})+\sum_{t=n}^{N-1}g(X^{(t)},t)\ 
\biggr\rvert\ X^{(n)}=x_i\right]\\
&=\sum_jP^{(N)}_{ij}f(x_j)+\sum_{t=n}^{N-1}\sum_jP_{ij}^{(t)}g(x_j,t)\\
&=\sum_j\sum_kP^{(n)}_{ik}(M^{N-n})_{kj}f(x_j)\\
&\qquad+\sum_{t=n}^{N-1}\sum_j\sum_kP_{ik}^{(n)}(M^{t-n})_{kj}g(x_j,t)\\
&=\sum_j(M^{N-n})_{ij}f(x_j)+\sum_{t=n}^{N-1}\sum_j(M^{t-n})_{ij}g(x_j,t)\\
&=\left[M^{N-n}\mathbf{f}+\sum_{t=n}^{N-1}M^{t-n}\mathbf{g}^{(t)}\right]_i \\
  &=\left[M \mb{u}^{(n+1)} + \mb{g}^{(n)}
\right]_i.
\end{split}
\end{equation}
At time $t=N$, only the term with $f$ contributes and we obtain 
$u_i^{(N)} = [\mb{f}]_i$.

\end{proof}

\begin{thm} 
Let $\mathbf{w}$ be a column vector with components associated with
the first-passage-time statistics
\begin{equation}
w_i\equiv\Ex\left[f(X^{(\tau_\Omega)}) + 
\sum_{t=0}^{\tau_{\Omega} - 1}g(X^{(t)})\ 
\biggr\rvert\ X^{(0)}=x_i\right],
\end{equation}
where $\tau_\Omega = \inf\{k \geq 0\ |\ X^{(k)}\notin\Omega\}$.
The $\mathbf{w}$ satisfies the linear equation
\begin{align}
(\hat{M} - \hat{I})\mathbf{w} &= -\mathbf{g} \\ 
\partial \mb{w} &= \mb{f}.
\end{align}
The $\partial{\mb{w}}$ extracts entries for all indices 
with $x_i \in \partial \Omega$.  The $\hat{M}$ refers to the rows 
with indices in the interior of $\Omega$.
\end{thm}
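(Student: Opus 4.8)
The plan is to run a first-transition (first-step) analysis, conditioning on the value of $X^{(1)}$ and invoking the Markov property together with time-homogeneity of the chain. First I would settle the boundary condition. If $x_i \in \partial\Omega$, i.e. $x_i \notin \Omega$, then $\tau_\Omega = \inf\{k \geq 0 : X^{(k)} \notin \Omega\} = 0$ because the chain has already left $\Omega$ at time $0$. Hence the running-cost sum $\sum_{t=0}^{\tau_\Omega-1} g(X^{(t)})$ is empty and $w_i = \Ex[f(X^{(\tau_\Omega)}) \mid X^{(0)} = x_i] = f(x_i)$, which is exactly $\partial\mb{w} = \mb{f}$.

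Next, for $x_i$ in the interior of $\Omega$ (so $\tau_\Omega \geq 1$), I would peel off the $t=0$ term of the running cost, which contributes precisely $g(x_i)$ since $x_i \in \Omega$, and then condition on $X^{(1)}$. The key observation is the Markov property at time one: given $X^{(1)} = x_j$, the law of the shifted path $(X^{(1)}, X^{(2)}, \dots)$ coincides with the law of $(X^{(0)}, X^{(1)}, \dots)$ started at $x_j$, and on the shifted path the exit index is $\tau_\Omega - 1$ (this uses that $x_i$ is interior, so exit has not yet occurred). Therefore $\Ex[f(X^{(\tau_\Omega)}) + \sum_{t=1}^{\tau_\Omega-1} g(X^{(t)}) \mid X^{(1)} = x_j] = w_j$, and summing over $j$ against the one-step probabilities $\Pr(X^{(1)} = x_j \mid X^{(0)} = x_i) = M_{ij}$ gives $w_i = g(x_i) + \sum_j M_{ij} w_j$. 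Rearranging, $\sum_j M_{ij} w_j - w_i = -g(x_i)$ for every interior index $i$, which in block form reads $(\hat{M} - \hat{I})\mb{w} = -\mb{g}$ once one restricts to the interior rows. (As a sanity check one can also obtain this by writing $w_i$ as the $N \to \infty$ limit of the finite-horizon statistics of the first theorem applied to the chain with $\partial\Omega$ made absorbing and $g$ set to zero there, but the direct first-step argument is cleaner.) The special case $\mb{f} = 0$, $\mb{g} = \bsy{1}$ then yields $w_i = \Ex[\tau_\Omega \mid X^{(0)} = x_i]$ as claimed.

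The main obstacle is rigor rather than ideas: one must know that $w_i$ is well-defined and that the expectation, the random-length sum, and the conditioning on $X^{(1)}$ may be interchanged. This needs an accessibility hypothesis — from every interior state the set $\partial\Omega$ is reached with probability one — which on the finite state space of the triangulation forces $\tau_\Omega$ to have geometrically decaying tails uniformly in the interior starting state; together with boundedness of $f,g$ on the finite point set this gives absolute summability and licenses the dominated-convergence exchanges. Under the same hypothesis the interior block $\hat{I} - \hat{M}$ is a strictly substochastic (defective) matrix, hence nonsingular, so the linear system has a unique solution, which must therefore be $\mb{w}$; this is worth recording since it is what makes the ``solve instead of sample'' statement meaningful.
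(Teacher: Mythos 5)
Your proposal is correct and uses essentially the same first-transition argument as the paper: condition on $X^{(1)}$, invoke the Markov property to identify the shifted expectation with $w_j$, and obtain $w_i = g(x_i) + \sum_j M_{ij} w_j$ on the interior together with $w_i = f(x_i)$ on $\partial\Omega$ (the paper merely splits the $f$ and $g$ contributions by linearity and also conditions on $\tau = n$, which your version avoids). Your added remarks on accessibility, geometric tails, and the invertibility of $\hat{I}-\hat{M}$ supply well-posedness details the paper leaves implicit.
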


\begin{proof}
Since the equation is linear, we will first consider 
the case with $\mb{f} =0, \mb{g} \neq 0$ and then the 
case with $\mb{f} \neq0, \mb{g} = 0$.  The general 
solution is then the sum of these two cases.  For
the case with $\mb{f} =0, \mb{g} \neq 0$ we have 
\begin{equation}
\begin{split}
  w_i&=\Ex\left[\sum_{t=0}^{\tau - 1}g(X^{(t)})\ 
\biggr\rvert\ X^{(0)}=x_i\right]\\
&=\sum_{n=0}^{\infty}\Ex\left[\sum_{t=0}^{n-1}
g(X^{(t)})\ \biggr\rvert\ X^{(0)}=x_i,\tau = n\right]\Pr\{\tau=n\}\\
&=\sum_{n=0}^{\infty}g(x_i)\Pr\{\tau=n\}\\
&\quad+\sum_{n=0}^{\infty}\sum_{x_j\in\Omega}
\Pr\{X^{(1)}=x_j\ |\ X^{(0)}=x_i\}\Pr\{\tau=n\}\\
&\qquad\times\Ex\left[\sum_{t=1}^{n-1}g(X^{(t)})\ 
\biggr\rvert\ X^{(0)}=x_i,X^{(1)}=x_j,\tau=n\right]\\
&=g(x_i)+\sum_{x_j\in\Omega}M_{ij}\sum_{n=0}^{\infty}\Pr\{\tau=n\}\\
&\quad\times\Ex\left[\sum_{t=1}^{n-1}g(X^{(t)})\ 
\biggr\rvert\ X^{(0)}=x_i,X^{(1)}=x_j,\tau=n\right]\\
&=g(x_i)+\sum_{x_j\in\Omega}M_{ij}w_j.
\end{split}
\end{equation}
In matrix form,
\begin{equation}
\hat{\mathbf{w}} = \mathbf{g} + \hat{M}\mathbf{w},
\end{equation}
where $\hat{w},\hat{M}$ refers to the entries in the rows
with indices of points in the interior
of the domain $\Omega$.  This can be expressed as
\begin{equation}
(\hat{M} - \hat{I})\mathbf{w} = -\mathbf{g}.
\end{equation}
The $\hat{I}$ is the linear map that extracts entries within the 
interior of the domain $\Omega$.

We next consider the case with $\mb{f} \neq 0$ and $\mb{g} = 0$.
Similarly, this follows from 
\begin{equation}
\begin{split}
w_i&=\Ex\left[f(X^{(\tau)})\ 
\biggr\rvert\ X^{(0)}=x_i\right]\\
&=\sum_{n=0}^{\infty}\Ex\left[
f(X^{(n)})\ \biggr\rvert\ X^{(0)}=x_i,\tau = n\right]\Pr\{\tau=n\}\\
&=\sum_{n=0}^{\infty}\sum_{x_j\in\Omega}
\Pr\{X^{(1)}=x_j\ |\ X^{(0)}=x_i\}\Pr\{\tau=n\}\\
&\qquad\times\Ex\left[f(X^{(n)})\ 
\biggr\rvert\ X^{(0)}=x_i,X^{(1)}=x_j,\tau=n\right]\\
&=\sum_{x_j\in\Omega}M_{ij}\sum_{n=0}^{\infty}\Pr\{\tau=n\}\\
&\quad\times\Ex\left[f(X^{(n)})\ 
\biggr\rvert\ X^{(1)}=x_j,\tau=n\right]\\
&=\sum_{x_j\in\Omega}M_{ij}w_j.
\end{split}
\end{equation}
For $x_i \in \partial \Omega$ we have
$w_i = \Ex\left[f(X^{(\tau)})\ \biggr\rvert\ X^{(0)}=x_i\right] = f(x_i)$,
since $\tau = 0$ in this case.
In matrix form this gives $\partial\mb{w}  =  \mb{f}$ and 
\begin{equation}
(\hat{M} - \hat{I})\mathbf{w} = 0.
\end{equation}
The $\partial \mb{w}$ extracts the entries with indices $i$ 
corresponding to the boundary $\partial \Omega$.
Putting both cases together we have that $\mb{w}$ satisfies 
for general $\mb{f},\mb{g}$ the linear system
\begin{eqnarray}
\nonumber
(\hat{M} - \hat{I})\mathbf{w} &=& -\mb{g} \\
\partial\mb{w} & = & \mb{f}.
\end{eqnarray}
 
\end{proof}
These results provide approaches for computing efficiently the statistics $\mb{u}$ 
and $\mb{w}$ 
without the need in some cases for Monte-Carlo sampling when the state space is not 
too large.  These results provide an analogue of the Backward-Kolmogorov PDEs 
in the setting of our Markov-Chain discretizations of particle drift-diffusion
dynamics on curved surfaces.

\end{document}